\renewcommand{\eqref}[1]{{(\ref{#1})}}
\newtheorem{theorem}{Theorem}[section]
\newtheorem{lemma}[theorem]{Lemma}
\newtheorem{definition}[theorem]{Definition}
\newtheorem{assumption}[theorem]{Assumption}
\newtheorem{remark}[theorem]{Remark}
\newcommand{\thmref}[1]{{Theorem~\ref{#1}}}
\newcommand{\lemref}[1]{{Lemma~\ref{#1}}}
\newcommand{\assref}[1]{{Assumption~\ref{#1}}}
\title{ A  STOCHASTIC DELAY MODEL FOR PRICING DEBT AND LOAN GUARANTEES: Theoretical results}
\author{Elisabeth Kemajou\footnote{Email: isakema@umn.edu, Corresponding  author,  Department of Mathematics, University of Minnesota, USA},
Salah-Eldin A. Mohammed\footnote{Email: salah@sfde.math.siu.edu, Department of Mathematics, Southern Illinois University Carbondale, USA} and Antoine Tambue\footnote{Email: Antoine.Tambue@math.uib.no, Department of Mathematics, University of Bergen, Johannes Bruns Gate 12,  5020, Bergen--Norway} }
\begin{document}

\maketitle

\begin{abstract}
We consider that the price of a firm follows a non linear stochastic delay differential equation. We also assume that any
claim value whose value depends on firm value and time follows a non linear stochastic delay differential equation. Using
self-financed strategy and replication we are able to derive a random partial differential equation (RPDE) satisfied by any corporate claim whose value is a function of firm value and time. Under specific final and boundary conditions, we solve the RPDE for the debt value and loan guarantees within a single period and homogeneous class of debt. 
\end{abstract}

\textbf{Keywords}:
Corporate claim, Levered firm, Debt security, Loan guarantees.

\section{INTRODUCTION}
The valuation of corporate claims has always been an important topic for
finance researchers. On one hand, bond issuers would like to know what factors affect prices and
yields, as yields represent their cost of capital. On the other hand, prospective bond buyers
are interested in knowing how sensitive yields and yield spreads are to various relevant
factors (e.g. leverage) as they develop investment strategies.
Due to the significant growth of the credit derivatives market, the interest in corporate claims values
 models and risk structure has recently increased. 
This growth is explained by the need of better prediction models to fit the real market data.

Corporate bankruptcy is central to the theory of the firm. A firm is generally considered bankrupt when 
it cannot meet a current payment on a debt obligation.
In this case the equityholders lose all claims on the firm,
and the remaining loss which is the difference between the face value of the fixed claims and the market 
value of the firm, is supported by the debtholders. 
This is the definition of bankruptcy that we adopt in this paper. Loan guarantees have been proposed by 
several authors as a way to encourage new investments for companies when they become insolvent.

The risk structure of interest rates on bonds with the same maturity is degree of the likelihood of 
default on the payment of interest and the principal. 
Returns are measured by yields to maturity of each bond. The difference between the yields of bonds
 with default risk and default free bonds, the yield spreads,
is defined as the spread between their interest rates. This yield spread is sometimes called risk 
premium since it is supposed to measure the additional yield that
risky bond pay in order to motivate investors to buy risky bonds instead of the less risky ones.
It does not seem to be a consensus among the researchers what the determinants of the risk structure are.
 Different variables have been considered to represent
a valid measure of risk depending on whether the same maturity or different maturities, see (\cite{GD2}). 

The current model in corporate finance was developed by Merton \cite{GD2,GD3}. 
This model is closed to Black and Scholes model for stock price. As the Black and Scholes model \cite{scott,RU}, the fitness of Merton model
can be questioned  since the model assume that the volatility
is constant  and  empirical evidence shows that volatility actually depends on time in a way that is not predictable.
Beside, the need for better ways of understanding the behavior of many natural processes has
motivated the development of dynamic models of these processes that
take into consideration the influence of past events on the current and
future states of the system \cite{ito,Mi,Ma,MS}.
Following  a delayed Black and Scholes formula proposed in \cite{GD4} we developed the corresponding delayed model
in corporate finance, which  has not yet been introduced. Because of the 
isomorphic relationship between levered equity and a European
call option (see Merton \cite{GD7}) on one hand, and the isomorphic correspondence between loan 
guarantees and common stock put options (see Merton \cite{GD3}), 
we can claim and prove that results obtained in the theory of option pricing are feasible 
in corporate liabilities pricing. 

The paper is organized as follows. In section 2 we present the stochastic delay model for corporate claims. 
In this section we provide keys definitions in corporate finance used in this work and develop the 
Random Partial Differential Equation for claims. In section 3, we evaluate the debt in a levered firm. 
The evaluation of the loan guarantees is provided in section 4. We end by studying the 
impact of an additional debt on the firm's risk structure in section 5.
 
\section {STOCHASTIC DELAY MODEL FOR CORPORATE CLAIMS}
Let us start by providing some keys definitions in finance which  will be extensively used in this work.
 \subsection{Keys Definitions}

\begin{definition}
$\left[\textbf{Firm value or Company value}\right]$\\
 The firm value or Company value is the market value of the company's machines and commercial activities. 
This value is equal to the market value of the equityholders plus the market value of the net financial debt.
\end{definition}

\begin{definition}
$\left[\textbf{Equity Value}\right]$\\
 The equity is the total dollar market value of all of a company's outstanding shares. Market value of
equity is calculated by multiplying the company’s current stock price by its number of
outstanding shares. It’s the total value of the business after taking out the amount owed
to debtholders.
\end{definition}
\begin{definition}
$\left[\textbf{Corporate claim or corporate liability \cite{Va}}\right]$\\
A corporate claim or corporate liability is
an official request for money usually in the form of compensation, from a corporation.
\end{definition}
\begin{definition}
$\left[\textbf{Debt security \cite{Va}}\right]$\\
A Debt security is a security issued by a company or government which represents money borrowed from the
security's purchaser and which must be repaid at a specified maturity date, usually at a
specified interest rate.
\end{definition}
\begin{definition}
 $\left[\textbf{Loan Guarantees) \cite{Va}}\right]$\\
Loan on which a promise is made by a third party or guarantor that he or she will be liable
if the creditor fails to fulfill their contractual obligations.
\end{definition}
\subsection{Stochastic Model and Random Partial Differential Equation for claims}
 In order to obtain a better prediction of the company value, we need to include its history. 
 Let us assume  that the price of the firm $V(t)$ at time $t\in [0,T]$
%
follows a non linear Stochastic Delay
 Differential Equation (SDDE) of the form
\begin{eqnarray}
\label{model}
\left\lbrace \begin{array}{l}
 dV(t)=(\alpha V(t)V(t-L_1)-C)dt+g(V(t-L_2))V(t)dW(t) \\
 \newline\\
 V(t)=\varphi(t),\,\, t\in [-L,0]\\
\end{array} \right.
\end{eqnarray}
on a probability space $(\Omega,\mathcal{F},P)$ with a filtration $(\mathcal{F}_t)_{0\leq t\leq T})$ satisfying the usual conditions.
The  constants $L_1$ and $L_2$ are positive,  $\alpha$ is the  riskless interest rate of return on the firm per unit time, $C$ 
is the total amount payout by the firm per unit time to either the shareholders or claims-holders (e.g.,dividends or interest payments) 
if positive, and it is the net amount received by the firm from new financing if negative.  
The constant $L=\max(L_1,L_2)$) represents the past length while $T$ is the maturity date.
The function  $g:\mathbb{R}\rightarrow\mathbb{R}$ 
is a continuous representing the volatility function on the firm value per unit time.
The initial process $\varphi : \Omega \rightarrow C([-L,0],\mathbb{R})$ is $\mathcal{F}_0$-measurable with respect to the 
Borel $\sigma$-algebra of $C([-L,0],\mathbb{R})$, 
actually $\varphi$ is the past price of the firm.
The process $W$ is a one dimensional standard Brownian motion adapted to the filtration $(\mathcal{F}_t)_{0\leq t\leq T}$.

The following theorem ensure that  the price model  \eqref{model} is feasible in the sense that it admit pathwise unique solution such that $V(t)> 0$  almost surely for all $t \geq 0$ 
whenever the initial path $\varphi (t)> 0$ for all $t \in [0, t]$.
\begin{theorem}
\cite{GD4}\label{Lem:L30}\\
The firm price model \eqref{model} has a unique solution. Furthermore, if  $C=0$
 the solution is represented  by the formula 
\begin{eqnarray} 
V(t)=\varphi(0) \exp\left(\left(\int_{0}^{t} \alpha(s) V(s-L_1)ds-\frac{1}{2}\int_0^t(g(V(s-L_2)))^2ds\right) \right.\\
+ \left. \int_0^t g(V(s-L_2))dW(s)\right) \nonumber\\
\end{eqnarray}

\end{theorem}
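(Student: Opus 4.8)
The plan is to establish existence, uniqueness, and the explicit representation by the \emph{method of steps}, which exploits the fact that the delays $L_1,L_2$ are strictly positive. The central observation is that on any interval on which the delayed arguments $V(t-L_1)$ and $V(t-L_2)$ are already known, equation \eqref{model} ceases to be a genuine delay equation and becomes an ordinary \emph{linear} (more precisely affine) scalar SDE in the current value $V(t)$.

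First I would partition $[0,T]$ by the grid generated by $\ell=\min(L_1,L_2)$. On the first block $[0,\ell]$ both shifted arguments lie in $[-L,0]$, so $V(t-L_1)=\varphi(t-L_1)$ and $V(t-L_2)=\varphi(t-L_2)$ are prescribed by the initial path. Writing $a(t)=\alpha\,\varphi(t-L_1)$ and $b(t)=g(\varphi(t-L_2))$, equation \eqref{model} reduces on this block to
\[
dV(t)=\bigl(a(t)V(t)-C\bigr)\,dt+b(t)V(t)\,dW(t),
\]
with continuous, $(\mathcal F_t)$-adapted coefficients. Since $a$ and $b$ are pathwise bounded on the compact block, this frozen SDE admits a pathwise unique strong solution. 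Solving successively on $[\ell,2\ell],[2\ell,3\ell],\dots$, where at each stage the delayed terms $t-L_1,t-L_2<k\ell$ are supplied by the solution already built on the previous blocks, and concatenating, yields a pathwise unique solution on all of $[0,T]$.

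The main obstacle to address is that \eqref{model} is \emph{not} globally Lipschitz: the drift carries the product $\alpha V(t)V(t-L_1)$ and $g$ is only assumed continuous, so no off-the-shelf existence theorem for stochastic delay equations applies directly. The method of steps is exactly what circumvents this difficulty, since freezing the delayed arguments linearises the equation in $V(t)$, and a linear SDE with merely continuous adapted (hence locally bounded) coefficients is always uniquely solvable, bypassing any global Lipschitz requirement on $g$.

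For the explicit formula when $C=0$, I would note that \eqref{model} then takes the homogeneous multiplicative form $dV(t)=V(t)\bigl[\alpha V(t-L_1)\,dt+g(V(t-L_2))\,dW(t)\bigr]$, whose solution on each block is the stochastic (Dol\'eans--Dade) exponential; this representation is manifestly positive, which both delivers the feasibility claim $V(t)>0$ a.s.\ (given $\varphi(0)>0$) and justifies applying It\^o's formula to $\log V(t)$. Doing so gives
\[
d\log V(t)=\Bigl(\alpha V(t-L_1)-\tfrac12\, g(V(t-L_2))^2\Bigr)dt+g(V(t-L_2))\,dW(t),
\]
and integrating from $0$ to $t$ with $V(0)=\varphi(0)$ produces precisely the claimed representation, the logarithm being justified a posteriori by first constructing $V$ through the method of steps and then verifying that the exponential satisfies \eqref{model}.
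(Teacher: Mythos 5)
Your proposal is correct. Note, however, that the paper's own ``proof'' of this theorem is nothing but a citation to \cite{GD4}: the argument you give --- partitioning $[0,T]$ into blocks of length $\min(L_1,L_2)$, freezing the delayed arguments so that on each block \eqref{model} becomes an affine scalar SDE with adapted, pathwise bounded coefficients, concatenating, and identifying the $C=0$ solution with a Dol\'eans--Dade exponential --- is precisely the method-of-steps proof carried out in that reference for the delayed stock-price model, so you have reconstructed the outsourced argument rather than taken a genuinely different route. The one point to tighten is your last step: as phrased, applying It\^o's formula to $\log V(t)$ and justifying the logarithm ``a posteriori'' is mildly circular; the clean order of operations is to define the candidate process directly by the exponential formula (which is well defined recursively, since on the first block its integrands involve only the initial path $\varphi$, and inductively thereafter), verify by It\^o's formula that this candidate satisfies \eqref{model} with $C=0$, and then invoke the pathwise uniqueness you have already established by the method of steps to conclude it equals $V$. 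Incidentally, the $\alpha(s)$ appearing in the paper's displayed formula is a typo for the constant $\alpha$ of the model \eqref{model}, so your formula does agree with the claimed one.
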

\begin{proof}
Proof can be found in \cite{GD4}, where the authors  deal with stock price.
\end{proof}

 Following the work in \cite{GD2}, in order to derive a random partial differential equation
which must be satisfied by any security whose value can be written as a function of the
value of the firm and time. 
We assume that any claim with market value $Y(t)$ (which can be replicated using self-financed strategy) at time  $t$ with $Y(t)=F(V(t),t)$ follows
a non linear stochastic delay differential equation
\begin{eqnarray}\label{eq1}
\left\lbrace \begin{array}{l}
  dY(t)=(\alpha_y Y(t) -C_y)dt+g_y(Y(t-L_2))Y(t)dW_y(t), \,\, t\in [0,T]\\
  \newline\\
  Y(t)=\varphi_y(t),\,\, t\in [-L,0],
  \end{array} \right.
\end{eqnarray}
on a probability space $(\Omega,\mathcal{F},P)$.
The constant $\alpha_y$ is the riskless interest rate of return per unit time on this claim; $C_y$ is the amount payout 
per unit time to this claim; $g_{y}: \mathbb{R}\rightarrow \mathbb{R}$ is a continuous function representing the volatility 
function of the return on this claim per unit time; the initial process $\varphi_y : \Omega \rightarrow C([-L,0],\mathbb{R})$ 
is $\mathcal{F}_0$-measurable with respect to the Borel $\sigma$-algebra of $C([-L,0],\mathbb{R})$.
The process $W_y$ is a one dimensional standard Brownian motion adapted to the filtration $(\mathcal{F}_t)_{0\leq t\leq T}$.

\begin{assumption}\label{A2}
The value of the company is unaffected by how it is financed (the capital structure irrelevance principle).
\end{assumption} 
\begin{theorem}
\label{th1}
 Assume that the value of the firm $V(t),\,t\in[0,T]$ follows the SDDE \eqref{model}. Furthermore, suppose that \assref{A2} 
 is satisfied and that the debt value accumulates interest compounded continuously at a rate of $r$, that is $B(t)=B(0)e^{rt}$. For any claim whose value is a function of the firm value and time i.e.
$Y(t)=F(V(t),t)$, where $F$ is twice continuously differentiable with respect to $V$ and once differentiable with respect to $t$, the following RPDE should be satisfied;

\begin{eqnarray}\label{eq:3008}
\dfrac{1}{2}g^2(V(t-L_{2}))v^2F_{vv}+(rv-C)F_v+F_t-rF+C_y=0,\,(t,v)\in (0,T) \times \mathbb{R}^{+}
\end{eqnarray}
with
$$F_t(v,t)=\frac{\partial F(v,t)}{\partial t},\, F_v(v,t)=\frac{\partial F(v,t)}{\partial v},\,F_{vv}(v,t)=\frac{\partial^2 F(v,t)}{\partial v^2}.$$
\end{theorem}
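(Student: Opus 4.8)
The plan is to follow the classical Black--Scholes--Merton hedging argument, adapted to the delay setting, in three moves: (i) express the dynamics of the claim through Itô's formula, (ii) build a self-financed, instantaneously riskless portfolio out of the firm, the claim and the riskless debt $B(t)=B(0)e^{rt}$, and (iii) invoke no-arbitrage to annihilate the unobservable drift $\alpha$ and read off the RPDE.

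First I would apply Itô's formula to $Y(t)=F(V(t),t)$, which is legitimate since $F$ is $C^{2,1}$. Because $F$ depends on the path of $V$ only through the current value $V(t)$ and on $t$, the delays enter \eqref{model} exclusively through the coefficients, so the ordinary one-dimensional Itô formula applies with quadratic variation $(dV)^2=g^2(V(t-L_2))V^2\,dt$. This yields
\begin{align}
dY = \Bigl[F_t + (\alpha V(t)V(t-L_1)-C)F_v + \tfrac{1}{2} g^2(V(t-L_2))V^2 F_{vv}\Bigr]dt + g(V(t-L_2))V F_v\,dW. \nonumber
\end{align}
Comparing the diffusion term with the one postulated in \eqref{eq1} forces the claim's noise to be driven by the same Brownian motion as the firm (the claim's risk is spanned by the firm, i.e. $W_y=W$) and produces the consistency relation $g_y(Y(t-L_2))F = g(V(t-L_2))V F_v$; this is precisely what makes replication feasible.

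Next I would form, following Merton, a zero-net-investment portfolio holding $w_1$ dollars in the firm, $w_2$ in the claim and $w_3$ in the riskless debt, with $w_1+w_2+w_3=0$. Accounting for the payout flows $C$ and $C_y$, the firm's and the claim's total returns per dollar have drifts $\alpha V(t-L_1)$ and $(\mu_F+C_y)/F$ respectively, where $\mu_F$ denotes the bracketed drift above, and the bond returns $r$. Choosing the hedge ratio $w_1 = -w_2 F_v V/F$ cancels the $dW$-term, so the portfolio is instantaneously riskless; by \assref{A2} and the absence of arbitrage a self-financed, riskless, zero-net-investment position must earn nothing, $d\Pi=0$.

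Finally I would substitute $w_1=-F_v V/F$, $w_2=1$, $w_3=F_v V/F-1$ into $d\Pi=0$, clear the denominator $F$, and collect terms. The decisive cancellation is that the two contributions $\pm\,\alpha F_v V(t)V(t-L_1)$ annihilate one another, so the unknown expected return $\alpha$ of the firm drops out completely, leaving $F_t - C F_v + \tfrac{1}{2} g^2(V(t-L_2))V^2 F_{vv} + C_y + r V F_v - rF = 0$; writing $v=V(t)$ gives \eqref{eq:3008}. I expect the main obstacle to be the economic and probabilistic justification of step (iii): setting up the self-financing condition correctly in the presence of the payout terms and arguing rigorously that the delayed, path-dependent hedge remains instantaneously riskless and arbitrage-free. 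By contrast the Itô and algebraic steps are routine, precisely because $F$ has $V(t)$ and $t$ as its only arguments while the delays sit inertly inside the coefficients.
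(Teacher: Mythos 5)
Your proposal is correct and follows essentially the same route as the paper: It\^o's formula applied to $F(V(t),t)$, a zero-net-investment self-financed portfolio in the firm, the claim and the riskless debt, and the no-arbitrage requirement that such a riskless portfolio earn nothing, with the $\alpha$-terms cancelling to leave the RPDE \eqref{eq:3008}. The only cosmetic difference is that you normalize $w_2=1$ and choose the explicit diffusion-cancelling hedge ratio, whereas the paper keeps $(z_1,z_2)$ general and imposes the nontrivial-solution (equal market price of risk) condition on the resulting $2\times 2$ homogeneous system; the algebra is identical.
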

\begin{proof}
The proof is closed to one in \cite{GD2} for non delayed model.
Note that, for a given $Y(t)=F(V(t),t)$, there are similarities between the $\alpha_y, g_y, dW_y$ and 
the corresponding $\alpha, g, dW$ in SDDE \eqref{model}.
Knowing that $V(t)$ is an Ito process and since we assumed that $F$ is twice continuously differentiable with respect to $v$ and once differentiable with respect to $t$, 
Ito formula, allows us to write the following for $Y(t)=F(V(t),t)$
{\small
\begin{eqnarray*}
\lefteqn{dF(V(t),t)} &&
\\ 
&& =F_t(V(t),t)dt+F_v\left[(\alpha V(t)V(t-L_{1}) -C)dt+g(V(t-L_{2}))V(t)dW(t)\right]\\
&& +\dfrac{1}{2}F_{vv}\left[g^2(V(t-L_{2}))V^2(t)\right](dW(t))^2.
\end{eqnarray*}}
Hence the dynamic equation for $Y(t)$ is
{\small
\begin{align}
\label{eq61}
\left\lbrace \begin{array}{l}
dY(t)=\left[\dfrac{1}{2}F_{vv}g^2(V(t-L_{2}))V^2(t)+F_v(\alpha V(t)V(t-L_{1}) -C)+F_t(V(t),t)\right]dt\\
\qquad \qquad +F_{v}g(V(t-L_{2}))V(t)dW(t)
\end{array} \right.
\end{align}}
From the uniqueness of solution to stochastic delay differential equation  in \cite[Theorem 1]{GD4}, 
we have the equality almost surely of the coefficients of the corresponding terms $dt$ and $dW(t)$ in (\ref{eq1}) and (\ref{eq61}) 
as follow
{\small
\begin{eqnarray*}\label{eq:3001a}
\left\lbrace \begin{array}{l}
\alpha_yY(t)-C_y=\alpha_yF(Y(t),t)-C_y\\
\qquad \qquad \;\;\equiv \dfrac{1}{2}g^2(V(t-L_{2}))V^2(t)F_{vv}(V(t),t)+(\alpha V(t)V(t-L) -C)F_v+F_t
\end{array} \right.
\end{eqnarray*}
\begin{eqnarray}
g_y(Y(t-L_{2}))Y(t)&= &g_y(F(V(t-L_{2}),t))F(V(t),t)\\
&\equiv&
g(V(t-L_{2})) V(t)F_v(V(t),t) \label{eq:3001b}\\
dW_y(t)&\equiv& dW(t).\label{eq:3001c}
\end{eqnarray}}

Following the self-financing and replication strategy (\cite{GD4}), let $z_1$ be the instantaneous number 
corresponding to the amount invested in the firm, 
$z_2$ be the instantaneous number corresponding to the amount invested in the security and $z_3$ be the 
instantaneous number corresponding to the amount invested in riskless debt.
Consider $dx$ the instantaneous return to the portfolio and assume the total investment in the portfolio is zero, 
we may write $z_1+z_2+z_3=0$ and then
{\small
\begin{eqnarray}
dx&=&z_1\dfrac{dV(t)+Cdt}{V(t)}+z_2\dfrac{dY(t)+C_ydt}{Y(t)}+z_3rdt \nonumber\\
&=&\dfrac{z_1\left[(\alpha V(t)V(t-L_{1}) -C)dt+g(V(t-L_{2}))V(t)dW(t)\right]+Cz_1dt}{V(t)} \nonumber\\
& &+\dfrac{z_2\left[(\alpha_y Y(t) -C_y)dt+g_y(Y(t-L_{2}))Y(t)dW_y(t)\right]+C_yz_2dt}{Y(t)}+z_3rdt\nonumber\\
&=&z_1\alpha V(t-L_{1})dt +z_1g(V(t-L_{2}))dW(t)+z_2\alpha_ydt+z_2g_y(Y(t-L_{2}))dW_y(t)\nonumber\\
&& -(z_1+z_2)r dt. \nonumber
\end{eqnarray}}
Hence from the equivalence (\ref{eq:3001c}), we have
{\small
\begin{equation}
dx=\left[z_1(\alpha V(t-L_{1})-r) +z_2(\alpha_y-r)\right]dt+\left[z_1\,g(V(t-L_{2}))+z_2\,g_y(Y(t-L_{2}))\right]dW(t).
\end{equation}}
Since the return on the portfolio is non stochastic and there is no arbitrage condition we have: $z_1\,g(V(t-L_{2}))+z_2\,g_y(Y(t-L_{2}))=0$ and $z_1(\alpha V(t-L_{1})-r) +z_2(\alpha_y-r)=0$ 
leading to the following system:
\begin{displaymath}
\left\lbrace
\begin{array}{rl}
z_1\,g(V(t-L_{2}))+z_2\,g_y\,(Y(t-L_{2}))&=0 \\
z_1(\alpha V(t-L_{1})-r) +z_2(\alpha_y-r)&=0. \end{array}\right.
\end{displaymath}
A non trivial solution ($z_i\neq 0$) to this system exists if and only if
\begin{equation}\label{eq201}
\left(\dfrac{\alpha V(t-L_{1})-r}{g(V(t-L_{2}))}\right)=\left(\dfrac{\alpha_y-r}{g_y(F(V(t-L_{2}),t))}\right).
\end{equation}
But from (\ref{eq:3001a}) and (\ref{eq:3001b}) substituting for $\alpha_y$ and $g_y(F(V(t-L_{2}),t))$, we get
$$\alpha_y=\dfrac{\dfrac{1}{2}g^2(V(t-L_{2}))V^2(t)F_{vv}+(\alpha V(t)V(t-L_{1}) -C)F_v+F_t+C_y}{F(V(t),t)}\quad $$ and

$$\quad g_y(F(V(t-L_{2}),t))=\dfrac{g(V(t-L_{2})) V(t)F_v}{F(V(t),t)}.$$

Replacing $\alpha_y$ and $g_y(F(V(t-L_{2}),t))$ in (\ref{eq201}), we obtain
{\small
\begin{eqnarray*}
\lefteqn{\dfrac{\alpha V(t-L_{1})-r}{g(V(t-L_{2}))}}&&\\
&&=\dfrac{\dfrac{1}{2}g^2(V(t-L_{2}))V^2(t)F_{vv}+(\alpha V(t)V(t-L_{1}) -C)F_v+F_t+C_y-rF(V(t),t)}{g(V(t-L_{2})) V(t)F_v}.
\end{eqnarray*}}
By rearranging terms and simplifying, we get
{\small
\begin{eqnarray}\label{eq211}
\lefteqn{\alpha V(t)V(t-L_{1})F_v-rV(t)F_v}&&\\
&&=\dfrac{1}{2}g^2(V(t-L_{2}))V^2(t)F_{vv}+(\alpha V(t)V(t-L_{1}) -C)F_v+F_t+C_y-rF(V(t),t).\nonumber
\end{eqnarray}}
Therefore, we can rewrite equation (\ref{eq211}) as the following  random parabolic partial differential equation for $F$
$$\dfrac{1}{2}g^2(V(t-L_{2}))v^2\,F_{vv}+(r v-C)F_v+F_t+C_y-rF=0.$$
\end{proof}
For any claim whose value depends on the value of the firm and time, the equation \eqref{eq:3008} must be satisfied under some specific boundary 
conditions and initial condition. From these boundary conditions, we will be able to distinguish the debt of a firm from its equity. 
 By definition the value $V$ of the company can be written as
 \begin{equation}
V(t) = F(V(t),t)+f(V(t),t),\label{en}
 \end{equation}
 where $f(V(t),t)$ is the value of the equity, $F(V(t),t)$ the value of debt a any time $t$  before the maturity.
 Because both $F$ and $f$ can only take on non-negative values, we have that for initial condition
\begin{equation}\label{eq10}
F(0,t)=f(0,t)=0.
\end{equation}
Further $F(V(t),t)\leq V(t)$ which implies the regular condition
\begin{equation}\label{eq11}
\dfrac{F(V(t),t)}{V(t)}\leq 1.
\end{equation} 
Using relation \eqref{en}, for $F$ satisfying the RPDE \eqref{eq:3008}, the equity $f$ therefore satisfy the following equation
\begin{eqnarray}
\label{eq:3009}
 \dfrac{1}{2}g^2(V(t-L))v^{2}f_{vv}+(rv-C)f_v+f_t +C-C_y -rf=0.
\end{eqnarray}
\begin{remark}
Notice that  $C$ and $\alpha$ ($C_y$ and $\alpha_y$) in the SDDE \eqref{model} and \eqref{eq1} respectively can be time dependent functions, 
in which case they will be measurable and integrable in the interval $[0,T]$.
\end{remark}
In the accompanied paper \cite{ElisaANTO} and in \cite{ElisaThesis}, efficient numerical methods to solve \eqref{eq:3008} and \eqref{eq:3009} 
subject to final and boundary conditions \eqref{eq10}, \eqref{eq12} and \eqref{bc} are presented.

In the sequel we will assume that $C=C_{y}=0$ and provide the representations of the exact solutions.
%
\section{\uppercase{Evaluation of Debt in a Levered Firm}}
\label{VD}
In this section, we consider a claim market value as the simplest case of corporate debt and therefore use the following  assumption.
\begin{assumption}
\label{ass}
 We assume that:
\begin{itemize}
 \item[(a)] The company is financed by:
\begin{description}
\item[1.] A single class of debt
\item[2.]  The equity.
\end{description}
\item[(b)] The following restrictions and provisions are stipulated in the contract according to the bond issue
\begin{enumerate}
\item\label{res1} The firm must pay an amount $B(T)$ to the debtholders at the maturity date $T$;
\item\label{res2} In case the firm cannot make the payment, the debtholders take over the company and the equityholders lose their investment;
\item\label{res3} The firm is not allow neither to issue a new senior claim on the firm nor to pay cash dividend during the option life.
In other words, there is no coupon payment nor dividends prior to the maturity of the debt (i.e. $C=C_{y}=0$).
\end{enumerate}
\end{itemize}
\end{assumption}
 From \assref{ass}, we have the following final conditions
\begin{equation}
\label{eq12}
F(V,T)=\min[V,B(T)],\,\,\,\,\, f(V,T)=\max (V-B(T),0).
\end{equation}
Toward infty, as for option prices, we also have the following boundary conditions 
 \begin{eqnarray}
  \label{bc}
  F(V,t)\sim v-B(T)e^{-r(T-t)},\,\,\,\, f(V,t)\sim V-B(T)e^{-r(T-t)}, \text{ as }v\rightarrow\infty.
 \end{eqnarray}
Since there are no coupon payments, the values of $C_y$ and  $C$ in equation \eqref{eq:3008} are zero. Equations \eqref{eq:3008} and
\eqref{eq:3009} coupled  with final and boundary conditions are given respectively by
\begin{eqnarray}
\label{eqou}
\left \lbrace \begin{array}{l}
\dfrac{1}{2}g^2(V(t-L))v^2f_{vv}+rvf_v+f_t-rf=0,\,\, 0<t<T\\
f(v,T)=\max(v-B(T),0),\,\,\,\,\, v>0\label{eqou1}\\
f(0,t)=0, \,\,\,\, f(v,t)\sim v-B(T)e^{-r(T-t)}, \text{ as } v\rightarrow\infty, \label{eqou2}
\end{array} \right.
\end{eqnarray}
 and 
\begin{eqnarray}
\label{eqoub}
\left \lbrace \begin{array}{l}
\dfrac{1}{2}g^2(V(t-L))v^2F_{vv}+rvF_v+F_t-rF=0,\,\, 0<t<T\\
F(v,T)=\min[V,B(T)],\,\,\,\, v>0\label{eqou1b}\\
F(0,t)=0, \,\,\,\, F(v,t)\sim B(T)e^{-r(T-t)}, \text{ as }v\rightarrow\infty, \label{eqou2b}
\end{array} \right.
\end{eqnarray}
We shall solve the above parabolic partial differential equation \eqref{eqou} 
directly using the standard method based on Fourier transforms.
\begin{lemma}\label{le2}
The backward parabolic RPDE of the form \eqref{eq:3009} ( where $C= C_y=0$ ) with final solution $f(v,T)=\phi(v)$ can be transformed to 
 the well known heat equation 
\begin{equation}\label{eq:3103}
 h_\tau=h_{xx},\quad h=h(x,\tau), \tau= \tau (t), x= x(v).
\end{equation}
\end{lemma}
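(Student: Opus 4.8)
The plan is to reproduce, pathwise in $\omega\in\Omega$, the classical reduction of the Black--Scholes operator to the heat operator. For a frozen sample path the coefficient $g^{2}(V(t-L))$ is merely a deterministic function of $t$, so \eqref{eq:3009} with $C=C_{y}=0$ is a linear backward parabolic equation whose diffusion part becomes $v$-independent after a logarithmic change of the spatial variable. Three ingredients are needed: (i) a logarithmic space change turning the Euler-type operator $\tfrac12 g^{2}v^{2}\partial_{vv}+rv\,\partial_{v}$ into a constant--coefficient operator; (ii) a reversal and rescaling of time that normalizes the diffusion coefficient to $1$ and converts the backward problem (data at $t=T$) into a forward one (data at $\tau=0$); and (iii) a gauge change of the unknown by an exponential factor that removes the surviving first- and zeroth-order terms.

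Concretely, I first set $x=\ln v$, so that $v\,\partial_{v}f=f_{x}$ and $v^{2}\partial_{vv}f=f_{xx}-f_{x}$, which turns \eqref{eq:3009} into
\[
f_{t}+\tfrac12 g^{2}\bigl(V(t-L)\bigr)f_{xx}+\Bigl(r-\tfrac12 g^{2}\bigl(V(t-L)\bigr)\Bigr)f_{x}-rf=0 .
\]
Next I introduce the rescaled time
\[
\tau(t)=\tfrac12\int_{t}^{T} g^{2}\bigl(V(s-L)\bigr)\,ds ,
\]
so that $\tau(T)=0$ and $d\tau=-\tfrac12 g^{2}(V(t-L))\,dt$; dividing by $\tfrac12 g^{2}$ after this substitution sends the leading term to $f_{xx}$ and the time derivative to $-f_{\tau}$, leaving only a transport term and a reaction term to be cleared.

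The main obstacle is precisely that $g^{2}(V(t-L))$ depends on $t$ (and on $\omega$), so the ratio $2r/g^{2}$ that plays the role of the constant $k$ in the textbook Merton/Black--Scholes reduction is itself time dependent; consequently a purely multiplicative substitution $f=e^{\alpha x+\beta\tau}h$ with \emph{constant} $\alpha$ cannot annihilate the first-order term $(r-\tfrac12 g^{2})f_{x}$ for every $\tau$. Thus the nominal change $x=x(v)$, $\tau=\tau(t)$ of the statement is clean only when $g$ is constant, and for genuinely time-dependent $g$ I would resolve the difficulty by combining the discount factor with a moving (characteristic) frame: writing $f=e^{-r(T-t)}F$ removes the reaction term $-rf$, and then passing to the coordinate $\xi=x+\int_{t}^{T}\bigl(r-\tfrac12 g^{2}(V(s-L))\bigr)\,ds$ translates away the transport term exactly, so that in the variables $(\xi,\tau)$ the equation collapses to the heat equation $h_{\tau}=h_{\xi\xi}$, i.e. \eqref{eq:3103} after renaming $\xi$ as the spatial variable. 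Verifying that the cross terms generated by the $t$-dependence of both $\xi$ and $\tau$ cancel is the one computation that must be carried out with care; once it does, the final datum $f(v,T)=\phi(v)$ transforms into an initial condition $h(\cdot,0)$ for the heat equation, completing the reduction.
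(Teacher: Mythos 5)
Your proposal is correct and is essentially the paper's own argument: the paper carries out your four steps (logarithmic space change, time rescaling $\tau=\tfrac12\int_t^T g^2(V(s-L_2))\,ds$, discount gauge, and drift-cancelling moving frame) as one composite substitution $v=B(T)e^{x}$, $f(v,t)=B(T)e^{rt}\,h\bigl(x-\tau+r(T-t),\,\tau\bigr)$, whose first argument is exactly your characteristic coordinate $\xi$ and whose verification is the same cancellation of cross terms you describe. Your observation that for non-constant $g$ the spatial argument must depend on $t$ as well as $v$ is implicitly built into the paper's substitution (so the lemma's notation $x=x(v)$, $\tau=\tau(t)$ is indeed loose on this point, as you note).
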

\begin{proof}
Let us make the following change of variables
{\small
\begin{eqnarray*}
\left\lbrace \begin{array}{l}
 v=B(T)e^x \\
 f(v,t)=B(T)e^{rt}h\left(x-\frac{1}{2}\int_t^T g^2(V(s-L_{2}))ds-r(T-t),\frac{1}{2}\int_t^T g^2(V(s-L_{2}))ds\right),
 \end{array} \right.
 \end{eqnarray*}}
 and
 \begin{eqnarray}
 \label{tranfo}
  \tau=\frac{1}{2}\int_t^Tg^2(V(s-L_2))ds.
 \end{eqnarray}

 The corresponding partial derivatives are given as $$ h_x=\dfrac{e^{-rt}}{B(T)}\dfrac{\partial v}{\partial x}f_v=\dfrac{e^{-rt}}{B(T)}vf_v,$$
{\small
\begin{eqnarray*}
f_t=B(T)re^{rt}h\left(x-\frac{1}{2}\int_t^T g^2(V(s-L_{2}))ds-r(T-t),\frac{1}{2}\int_t^T g^2(V(s-L_{2}))ds\right)\\
+B(T)e^{rt}\left[\left(\dfrac{1}{2}g^2(V(t-L_{2}))-r\right)h_x-\dfrac{1}{2}g^2(V(t-L_{2}))h_\tau\right].
\end{eqnarray*}
Then
{\small
\begin{eqnarray*}
h_\tau=\dfrac{-f_t+B(T)re^{rt}h\left(x-\frac{1}{2}\int_t^T g^2(V(s-L_{2}))ds-r(T-t),\frac{1}{2}\int_t^T g^2(V(s-L_{2}))ds\right)}{\dfrac{1}{2}B(T)e^{rt}g^2(V(t-L_{2}))}\\
\newline\\
+\dfrac{\dfrac{1}{2}B(T)e^{rt}g^2(V(t-L_{2}))h_x-rB(T)e^{rt}h_x}{\dfrac{1}{2}B(T)e^{rt}g^2(V(t-L_{2}))}.
\end{eqnarray*}}}
Applying the change of variables, we get
\begin{equation}
h_\tau=\dfrac{-f_t+rf(v,t)+\dfrac{1}{2}g^2(V(t-L_{2}))vf_v-rvf_v}{\dfrac{1}{2}B(T)e^{rt}g^2(V(t-L_{2}))},\label{eq:3100}
\end{equation}

\begin{equation}\label{eq:3101}
h_{xx}=\dfrac{e^{-rt}}{B(T)}\dfrac{\partial (vf_v)}{\partial v}\dfrac{\partial v}{\partial x}=\dfrac{e^{-rt}}{B(T)}v(f_v+vf_{vv}).
\end{equation}
Plugging \eqref{eq:3100} and \eqref{eq:3101} into \eqref{eq:3103}, we get
\begin{equation}
\dfrac{1}{2}g^2(V(t-L_{2}))v^2f{vv}+rvf_v+f_t-rf=0,
\end{equation}
We can observe from \eqref{tranfo} that the final condition  in \eqref{eqou} correspond to the initial condition in \eqref{eq:3103}.
\end{proof}

\begin{theorem}
\label{th2} Assume that the value of the firm $V(t),\,t\in[0,T]$ follows the SDDE \eqref{model}. Furthermore, suppose that \assref{A2} and \ref{ass} are satisfied. The equity function $f$, solution of \eqref{eqou} and the debt function $F$, solution of  \eqref{eqoub} are given respectively by

\begin{eqnarray}
 f(V(t),t)=V(t)\Phi(x_1)-Be^{-r(T-t)}\Phi(x_2),\label{eq17}\\
 F(V(t),t)=Be^{-r(T-t)}\left[ \Phi(x_2) + \dfrac{1}{d}\Phi(-x_1)\right],\label{eq17b}
\end{eqnarray}
where
\begin{eqnarray}
\label{expression}
\left\lbrace \begin{array}{l}
\Phi(x)=
\dfrac{1}{\sqrt{2\pi}}\int_{-\infty}^{x}e^{-\frac{1}{2}y^2}dy,\\
\newline\\
x_1=\dfrac{\log\,\dfrac{V(t)}{B}+r(T-t)+\frac{1}{2}\int_t^Tg^2(V(s-L_{2}))ds}{\sqrt{\int_t^Tg^2(V(s-L_{2}))ds}},\\
\newline\\
x_2=x_1-\sqrt{\int_t^Tg^2(V(s-L_{2}))ds}\\
\newline\\
d=\dfrac{Be^{-r(T-t)}}{V(t)}
\end{array} \right.
\end{eqnarray}
\end{theorem}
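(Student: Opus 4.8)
The plan is to solve the equity problem \eqref{eqou} first and then recover the debt function $F$ algebraically from the identity $V=F+f$. The engine is \lemref{le2}: along each fixed Brownian path the quantity $\tau=\frac{1}{2}\int_t^T g^2(V(s-L_2))\,ds$ is an honest positive number (decreasing in $t$, vanishing at $t=T$), so the change of variables $v=B(T)e^x$, $f(v,t)=B(T)e^{rt}h(\xi,\tau)$ with $\xi=x-\frac{1}{2}\int_t^T g^2(V(s-L_2))\,ds-r(T-t)$ turns \eqref{eqou} into the constant-coefficient heat equation \eqref{eq:3103}. This is the crucial conceptual step: the random PDE is read pathwise as a deterministic heat equation in the rescaled time $\tau$, and the randomness survives only through the explicit dependence of $\tau$ and $\xi$ on the path.

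Next I would transcribe the data. At $t=T$ one has $\tau=0$ and $\xi=x$, so the final condition $f(v,T)=\max(v-B(T),0)$ becomes the initial datum $h(x,0)=e^{-rT}\max(e^x-1,0)$, while $f(0,t)=0$ and the growth prescription as $v\to\infty$ pass to the decay/growth conditions for $h$ at $x\to\mp\infty$ that single out the Poisson (heat-kernel) solution among all solutions of \eqref{eq:3103}. I would then write that solution by convolution with the Gaussian kernel,
$$h(\xi,\tau)=\frac{1}{2\sqrt{\pi\tau}}\int_{-\infty}^{\infty} h(y,0)\,e^{-(\xi-y)^2/(4\tau)}\,dy,$$
and exploit that $h(\cdot,0)$ is supported on $\{y>0\}$. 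Splitting into the $e^y$ piece and the constant piece and completing the square in each exponent turns both integrals into tail integrals of the standard normal density, i.e. into values of $\Phi$; the shift produced by completing the square in the $e^y$ piece is exactly what separates $x_1$ from $x_2=x_1-\sqrt{\int_t^T g^2(V(s-L_2))\,ds}$. Undoing the substitution (using $\sqrt{2\tau}=\sqrt{\int_t^T g^2\,ds}$ together with the prefactor $B(T)e^{rt}$ and the $r(T-t)$ shift carried in $\xi$) delivers the closed form \eqref{eq17}.

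For the debt I would not re-solve \eqref{eqoub} from scratch: the final conditions satisfy $\min[v,B(T)]+\max(v-B(T),0)=v$ and the operator in \eqref{eqoub} is linear, so $F=V-f$ solves \eqref{eqoub} with its own data. Substituting \eqref{eq17} and using $1-\Phi(x_1)=\Phi(-x_1)$ with $d=Be^{-r(T-t)}/V(t)$ rearranges $V-f=V\Phi(-x_1)+Be^{-r(T-t)}\Phi(x_2)$ into the stated form \eqref{eq17b}. I expect the only genuine work to lie in the Gaussian bookkeeping at the convolution step: tracking the $-r(T-t)$ shift and the completion-of-the-square constant so that the arguments come out precisely as $x_1$ and $x_2$, and confirming that the recovered $f$ indeed matches the transplanted boundary data. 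The pathwise reading of the RPDE and the $F=V-f$ shortcut make everything else routine.
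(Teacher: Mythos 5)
Your proposal is correct and follows essentially the same route as the paper's proof: the change of variables from \lemref{le2} reducing \eqref{eqou} pathwise to the heat equation, the Gaussian-kernel convolution with the transplanted final datum $e^{-rT}\max(e^x-1,0)$, completion of the square to produce $\Phi(x_1)$ and $\Phi(x_2)$, and then recovering the debt via $F(V(t),t)=V(t)-f(V(t),t)$ rather than solving \eqref{eqoub} directly. Your explicit justification of that last step (linearity plus $\min[v,B(T)]+\max(v-B(T),0)=v$) is a small improvement in rigor over the paper, which simply invokes the relation.
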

\begin{proof}
From \lemref{le2},  we have
\begin{eqnarray}\label{eqn4}
\lefteqn{f(V,t)}\\
&=&B(T)e^{rt}h(x,\tau),\nonumber\\
&=&B(T)e^{rt}h(x-\dfrac{1}{2}\int_t^Tg^2(V(s-L_{2}))ds+r(T-t),\dfrac{1}{2}\int_t^Tg^2(V(s-L_{2}))ds) \nonumber.
\end{eqnarray}

But, the fundamental solution to the diffusion equation \eqref{eq:3103} is given by the Green's function
\begin{equation*}\label{eqn5}
G(x,\tau)=\dfrac{1}{\sqrt{4\pi\tau}}e^{-\frac{x^2}{4\tau}}.
\end{equation*}
Furthermore, the general solution $h$ with initial condition $h(x,0)=\phi(x)$ is given by the convolution
\begin{eqnarray*}
h(x,\tau)&=&h(x,0)*G(x,\tau)\\\
&=&\int_{-\infty}^{\infty}G(x-\eta,\tau)\phi(\eta)d\eta\\\
&=&\int_{-\infty}^{\infty}\dfrac{1}{\sqrt{4\pi\tau}}\phi(\eta)\exp\left[-\dfrac{(x-\eta)^2}{4\tau} \right]d\eta.
\end{eqnarray*}
Now,
\begin{equation}
\begin{array}{ll}
h(x,\tau)=\dfrac{e^{-rT}}{\sqrt{4\pi\tau}}\int_{-\infty}^{\infty}\max[e^{\eta}-1,0]\exp\left[-\dfrac{\left(x-\tau-rt+rT -\eta\right)^2}{4\tau}\right] d\eta&\\
\qquad\quad=\dfrac{e^{-rT}}{\sqrt{4\pi\tau}}\int_{0}^{\infty}\exp\left[\dfrac{4\eta\tau -\left(x-\tau-rt+rT -\eta\right)^2}{4\tau}\right] d\eta&\\
\qquad\qquad\qquad-\dfrac{e^{-rT}}{\sqrt{4\pi\tau}}\int_{0}^{\infty}\exp\left[ -\dfrac{\left(x-\tau-rt+rT -\eta\right)^2}{4\tau}\right] d\eta&\\
\qquad\quad=I_1-I_2,&
\end{array}
\end{equation}
where $$I_1=\dfrac{e^{-rT}}{\sqrt{4\pi\tau}}\int_{0}^{\infty}\exp\left[\dfrac{4\eta\tau -\left(x-\tau-rt+rT -\eta\right)^2}{4\tau}\right] d\eta$$ and $$I_2=\dfrac{e^{-rT}}{\sqrt{4\pi\tau}}\int_{0}^{\infty}\exp\left[ -\dfrac{\left(x-\tau-rt+rT -\eta\right)^2}{4\tau}\right] d\eta.$$ We first solve $I_1$. We make the following change of variable $$z=\dfrac{4\eta\tau -\left(x-\tau-rt+rT -\eta\right)^2}{\sqrt{2\tau}}\rightarrow d\eta=\sqrt{2\tau}dz.$$ Completing the perfect square in the exponential of the integrand, we have
$$\begin{array}{ll}
\dfrac{4\eta\tau -\left(x-\tau-rt+rT -\eta\right)^2}{4\tau}&\\
\qquad\quad=-((x+\tau-rt+rT)-\eta)^2+(x+\tau-rt+rT)^2-(x-\tau-rt+rT)^2&\\
\qquad\quad=-((x+\tau-rt+rT)-\eta)^2+4\tau(x-rt+rT).&
\end{array}$$
  Moreover, we define the lower limit of the integration as $$x_1=\dfrac{\left(x+\tau-rt+rT\right)}{\sqrt{2\tau}}.$$ Hence we can write 
$$\begin{array}{ll}
I_1=\exp\left[-rT+\dfrac{4\tau(x-rt+rT)}{4\tau}\right]\int_{-x_1}^{\infty}\dfrac{\exp\left[-\dfrac{z}{2}\right]}{\sqrt{2\pi}} dz&\\
\quad=e^{x-rt}\Phi(x_1).&
\end{array}$$
We will compute $I_2$ in a similar way. Let us make the following change of variable $$y=\dfrac{\left(-x+\tau+rt-rT +\eta\right)}{\sqrt{2\tau}}\rightarrow\eta=\sqrt{2\tau}dy.$$ 
As before we define the lower limit of the integration as $$x_2=\dfrac{\left(x-\tau-rt+rT\right)}{\sqrt{2\tau}}$$
 $$\begin{array}{ll}
I_2=\dfrac{e^{-rT}}{\sqrt{4\pi\tau}}\int_{0}^{\infty}\exp\left[ -\dfrac{\left(x-\tau-rt+rT -\eta\right)^2}{4\tau}\right] d\eta&\\
\quad=e^{-rT}\int_{-x_2}^{\infty}\dfrac{\exp[-\frac{y^2}{2}]}{\sqrt{2\pi}}
\quad=e^{-rT}\Phi(x_2).&
\end{array}$$
Using the results above, we obtain the solution of the heat equation
\begin{equation}
h(x,\tau)=e^{x-rt}\Phi(x_1)-e^{-rT}\Phi(x_2).
\end{equation}
Now, we want to proceed backward using the relation between $h,x\text{ and }\tau$ and $f,v\text{ and }t$, respectively to get the solution for the RPDE.
From relation \eqref{eqn4}, we can write the solution of the RPDE \eqref{eqou} as follow
\begin{equation*}
\begin{array}{ll}
f(v,t)=v\Phi(x_1)-B(T)e^{-r(T-t)}\Phi(x_2).&
\end{array}
\end{equation*}
Finally, if we replace $v$ by the value of the company $V(t)$ in the above equation, we obtain the formula for the equity value
\begin{equation}
\begin{array}{ll}
f(V(t),t)=V(t)\Phi(x_1)-B(T)e^{-r(T-t)}\Phi(x_2),&
\end{array}
\end{equation}

The debt function $F$ in \eqref{eq17b} is obtained using the relation $F(V(t),t)=V(t)-f(V(t),t)$. More details can be found in \cite{ElisaThesis}.
\end{proof}
Notice that we can consider $F$ as a function of $V$ and $\tau_{1}=T-t$ (see \cite{GD2} when $g$ is constant).  Let  $R(\tau_{1})$ 
  the yield to the maturity on the risky debt provided that the firm does not default defined by
\begin{eqnarray}
 e^{-R(\tau_{1}) \tau_{1}}=\dfrac{F(V,\tau_{1})}{B}.
\end{eqnarray}
We therefore have  
\begin{equation}
 R(\tau_{1})-r=-\dfrac{1}{\tau_{1}}\log\,\left\lbrace\Phi\left[x_2\right]+\dfrac{1}{d}\Phi\left[-x_1\right]
 \right\rbrace
 \label{eq181}
\end{equation}
It seems reasonable to call $R(\tau_1)-r$ a risk premium in which case equation (\ref{eq181}) defines a risk structure of interest rates.
As in Merton (\cite{GD2}), the risk premium is a function of the volatility function $g$ and $d$.

\section{\uppercase{Evaluation of Loan Guarantees}}
Now let us examine the impact of a guarantor, that is a government or an institution insuring payment to the bondholders in any case.
Now we make the following assumption.
\begin{assumption}
\label{ass1}
We assume that
\begin{itemize}
 \item[(a)] The company is financed by:
 \begin{itemize}
  \item[1.] A single class of debt.
  \item [2.] The equity.
  \item [3.]  The guarantee on the debt.
 \end{itemize}
\item [(b)] The following restrictions and provisions are stipulated in the contract according to the loan guarantees issue.
 The contract stipulate that 
 \begin{enumerate}
\item\label{res1} In case the management on the maturity date is unable to make the payment promised, the government will meet these payments with no uncertainty;
\item\label{res2} The firm is expected to pay an amount at least equal to its actuarial cost for the guarantee, so that in case this happens, 
the firm is required to default all its assets to the guarantor;
\item\label{res3} The firm is not allow neither to issue a new senior claim on the firm nor to pay cash dividend during the option life i.e $C=C_y=0$.
\end{enumerate}
\end{itemize}
\end{assumption}

 Notice that the presence of a guarantor transforms the debt which was a risky asset to a riskless asset. If the firm value is less than the promised payment,
 then the debtholders receive $B$, the equity holders receive nothing. Therefore, the guarantor lose the amount $B - V(T)$. However, if the firm value 
 is greater than the promised payment, then the debtholders receive $B$ and the equity holders receive $V(T) - B$ as without the guarantee.  In other words,
 the guarantor has no impact on the equity value ($\max[V(T)-B,0]$) at the maturity date, but the debt value is riskless and always known as $B$. However, 
 the value of the guarantor is the non positive value $\min[V(T)-B,0]$. In effect, the result of the guarantee is to create an additional cash inflow to
 the firm of the amount $-\min[V(T)-B,0]$. But, $-\min[V(T)-B,0]=\max[B-V(T),0]$. Therefore, if $G(T)$ is the cost we are looking for, where the length 
 of time until the maturity date of the bond is $T$, we can write
\begin{eqnarray}
 G(T)=G(V(T),T)=\max[B-V(T),0],
 \end{eqnarray}
 
with 
 \begin{eqnarray}
\label{eqoug}
\dfrac{1}{2}g^2(V(t-L))v^2 G_{vv}+rv G_v+G_t-rG=0,\,\, 0<t<T\\
G(0,t)=0, 
\end{eqnarray}
where $B$ can be taken as the strike price and $V(T)$ as the stock price $S(T)$. These similarities
between the evaluation of $G(T)$ and the evaluation of an European put option allow us to say that loan guarantees works as an European put option on 
the firm value giving to the management the right but not an obligation to sell the amount $B$ to a guarantor.

\begin{theorem}\label{proa1}
Assume that the value of the firm $V(t),\,t\in[0,T]$ follows the SDDE \eqref{model}. Furthermore, suppose that \assref{A2} and \ref{ass1} are satisfied.
Then loan guarantees (a fair premium equal to the present value of the cash flows from the option) is given by
\begin{eqnarray} 
 \label{eq1821}
G(V(t),t) = Be^{-r(T-t)}\Phi \left[x_1\right]
 -V(t)\Phi\left[x_2\right], \nonumber
\end{eqnarray}
with \eqref{expression}.
\begin{proof}
 The proof  is similar to the proof of \thmref{th2} using \lemref{le2} and can be found in \cite{ElisaThesis}. 
\end{proof}

\end{theorem}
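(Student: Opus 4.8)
The plan is to mirror the argument of \thmref{th2}, since the guarantee value $G$ obeys exactly the same homogeneous RPDE \eqref{eqoug} as the equity; only the terminal data differ, the payoff here being the put profile $G(v,T)=\max[B-v,0]$ rather than the call profile $\max[v-B,0]$. First I would invoke \lemref{le2} verbatim, applying the change of variables $v=B(T)e^{x}$, $\tau=\tfrac12\int_t^T g^2(V(s-L_2))\,ds$, and $G(v,t)=B(T)e^{rt}h$, to convert \eqref{eqoug} into the heat equation $h_\tau=h_{xx}$. Under this substitution the terminal condition becomes the initial datum $h(x,0)=\phi(x)$, with $\phi$ encoding $\max[1-e^{x},0]$, which is supported on the half-line $\{x<0\}=\{v<B\}$, the reflection of the call's support.

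Next I would represent the solution by convolving this initial datum with the Green's function $(4\pi\tau)^{-1/2}e^{-x^2/4\tau}$, exactly as in \thmref{th2}. Because $\max[1-e^{\eta},0]$ vanishes for $\eta>0$, the convolution integral runs over $(-\infty,0)$ rather than $(0,\infty)$ and splits as a difference of two pieces. In the piece carrying the factor $e^{\eta}$ I would complete the square in the exponent, picking up the discounting factor just as in the $I_1$ computation, and in both pieces change variables to a standardized Gaussian. Tracking the new limits then yields two normal-CDF terms; transforming back through $v=B(T)e^{x}$ and $G=B(T)e^{rt}h$ reproduces the closed form \eqref{eq1821} with $x_1,x_2$ as in \eqref{expression}.

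The step I expect to be delicate is the bookkeeping of integration limits and signs: since the put payoff lives on the opposite half-line from the call, the Gaussian integrals evaluate $\Phi$ at the complementary arguments, and one must take care not to confuse $\Phi(x_i)$ with $\Phi(-x_i)$. A cleaner route that sidesteps this entirely is put--call parity. Both $f$ and $G$ solve the same linear homogeneous RPDE, and a direct substitution shows that $v-B(T)e^{-r(T-t)}$ is itself an exact solution of \eqref{eqoug} whose terminal value $v-B$ equals $f(v,T)-G(v,T)=\max[v-B,0]-\max[B-v,0]$. Uniqueness then forces $G(V(t),t)=f(V(t),t)-V(t)+B(T)e^{-r(T-t)}$, and substituting the equity formula \eqref{eq17} together with the identity $1-\Phi(x)=\Phi(-x)$ delivers the guarantee value in closed form with no integral evaluation at all.
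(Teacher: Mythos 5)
Both of your routes are sound in outline, and your first one is precisely the paper's: the paper's proof of \thmref{proa1} is a one-line pointer (repeat the proof of \thmref{th2} via \lemref{le2} with the put payoff $\max[B-v,0]$ in place of the call payoff; details deferred to the thesis), and your first two paragraphs fill in exactly that argument. Your put--call parity route is genuinely different and cleaner: since \eqref{eqoug} is linear and $v-Be^{-r(T-t)}$ solves it with terminal value $v-B=\max[v-B,0]-\max[B-v,0]$, a uniqueness assertion (which the paper also uses tacitly when it represents solutions by the heat kernel) gives $G=f-V+Be^{-r(T-t)}$, so you can recycle \eqref{eq17} and avoid all Gaussian integrals.

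However, you glossed over exactly the sign bookkeeping you yourself flagged as delicate, and it is not cosmetic. Push your parity identity through \eqref{eq17}:
\begin{align*}
G(V(t),t) &= V(t)\Phi(x_1)-Be^{-r(T-t)}\Phi(x_2)-V(t)+Be^{-r(T-t)}\\
&= Be^{-r(T-t)}\bigl(1-\Phi(x_2)\bigr)-V(t)\bigl(1-\Phi(x_1)\bigr)\\
&= Be^{-r(T-t)}\Phi(-x_2)-V(t)\Phi(-x_1).
\end{align*}
With $x_1,x_2$ as defined in \eqref{expression}, this is \emph{not} the display in \thmref{proa1}, which reads $Be^{-r(T-t)}\Phi(x_1)-V(t)\Phi(x_2)$: the arguments are complementary, and $x_1\neq -x_2$ unless $V(t)=Be^{-r(T-t)}$. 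The printed formula also fails a sanity check: as $V(t)\to\infty$ both $x_1,x_2\to+\infty$, so it tends to $Be^{-r(T-t)}-V(t)\to-\infty$, whereas a guarantee (put) value must remain nonnegative and vanish in that limit, as your expression does. So your concluding claim that the computation ``reproduces the closed form with $x_1,x_2$ as in \eqref{expression}'' is false as written; what you actually prove is the formula with $\Phi(-x_2)$ and $\Phi(-x_1)$. The mismatch is a notational slip in the paper itself, reconciled only if $x_1,x_2$ are re-defined through $d=Be^{-r(T-t)}/V(t)$ as Merton does and as the paper does in Section 5, where $x_1=\bigl(\log d+\tfrac12\int_t^T g^2\,ds\bigr)\bigl(\int_t^T g^2\,ds\bigr)^{-1/2}$, which equals $-x_2$ of \eqref{expression} (and the matching $x_2$ equals $-x_1$). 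A complete write-up must either state the result in the form you derived or make that change of convention explicit. Relatedly, when you invoke uniqueness you must also correct the side condition: the boundary value $G(0,t)=0$ stated with \eqref{eqoug} is incompatible with the put payoff (your parity solution gives $G(0,t)=Be^{-r(T-t)}$), so the uniqueness argument must be run with the correct boundary data $G(0,t)=Be^{-r(T-t)}$ and $G(v,t)\to 0$ as $v\to\infty$.
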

\begin{remark}
Notice that the probabilistic methods can be used to derive the the equity $f$, the debt $F$ and the loan guarantee $G$. The technique is the same as in \cite{GD4} for options price.
More details can be found in \cite{ElisaThesis,GD4}.
The analysis of the risk structure for an homogeneous class of debt is done in the same way as in Merton model in \cite{GD2}(see \cite{ElisaThesis}).

\end{remark}
\begin{remark}
Notice that when the volatility function $g$ is constant, all results in this paper are the same as Merton's results in \cite{GD2}.
\end{remark}
\section{\uppercase{Impact of an additional debt on the firm's risk structure}}
Let us verify the impact of the guarantee on the company.
Assume a levered company financed by equity and debt.
Assume the face value of the debt is $B$. Let us compute the probability
 of default of this company given by $P(V(T)<B)$.
From the work in \cite{GD4,ElisaThesis}, by setting $\tilde{V}(t)=e^{-rt}V(t)$  we have 
 \begin{eqnarray}\label{eq:2010}
\tilde{V}(t)=\varphi(0)\exp\left(\int_0^tg(V(s-L_2))dW^*(s)-\dfrac{1}{2}\int_0^tg^2(V(s-L_2))ds\right),
\end{eqnarray} 
where
 $$W^*(t) := W(t)+\int_0^t\dfrac{\alpha V(s-L_1)-r}{g(V(s-L_2))}ds,\,\, t \in[0,T],$$
Indeed,
{\small
\begin{eqnarray*}
 \lefteqn {P(V(T)<B)}\\
&=&P\left( \tilde{V}(T)e^{rT}<B\right) \\
 &=&P\left(\tilde{V}(t)\exp\left(-\frac{1}{2}\int_t^Tg^2(V(s-L_{2}))ds +\int_t^Tg(V(s-L_{2}))dW^*(s)\right)<Be^{-rT}\right)\\
&=&P\left(\exp\left(-\frac{1}{2}\int_t^Tg^2(V(s-L_{2}))ds +\int_t^Tg(V(s-L_{2}))dW^*(s)\right)<\dfrac{Be^{-r(T-t)}}{V(t)}\right)\\
&=&P\left(\int_t^Tg(V(s-L_{2}))dW^*(s)<\log\,d+\frac{1}{2}\int_t^Tg^2(V(s-L_{2}))ds\right)\\
&=&P\left(\dfrac{\int_t^Tg(V(s-L_{2}))dW^*(s)}{\sqrt{\int_t^Tg^2(V(s-L_{2}))ds}}<\dfrac{\log\,d+\frac{1}{2}\int_t^Tg^2(V(s-L_{2}))ds}{\sqrt{\int_t^Tg^2(V(s-L_{2}))ds}}\right)\\
&=&\Phi\left(\dfrac{\log\,d+\frac{1}{2}\int_t^Tg^2(V(s-L_{2}))ds}{\sqrt{\int_t^Tg^2(V(s-L_{2}))ds}}\right)\\
&=&\Phi\left(x_{1}\right)
\end{eqnarray*}}

Notice that we have used  the fact that $$\dfrac{\int_t^Tg(V(s-L))dW^*(s)}{\sqrt{\int_t^Tg^2(V(s-L))ds}}$$ 
is normally distributed with mean $0$ variance $1$ so that $\int_t^Tg(V(s-L))dW^*(s)$ is normally distributed 
with mean $0$ and variance $\int_t^Tg^2(V(s-L))ds$.

Now let us consider that a different debt is added to value of the company $V(t)$ and compare the probability of default with the previous.
An additional debt of face value $B^{'}$ will increase the total face value which becomes $B+B^{'}$.
 If $V(t)>B$, that is $\dfrac{V(t)}{B}>\dfrac{V(t)+B^{'}}{B+B^{'}}=\dfrac{V^{'}(t)}{B+B^{'}}$ where $V^{'}(t)=V(t)+B^{'}$. Since logarithm is an increasing function, we can write
$\log\left( \dfrac{B}{V(t)}\right) <\log\left( \dfrac{B+B^{'}}{V^{'}(t)}\right)$. So that, $x_{1}<x^{'}_{1}$ where
$$x^{'}_{1}=\dfrac{\log\left( \dfrac{B+B^{'}}{V^{'}(t)}\right)-r(T-t)+\frac{1}{2}\int_t^Tg^2(V(s-L))ds}{\sqrt{\int_t^Tg^2(V(s-L))ds}}$$ and therefore
$\Phi(x_{1})<\Phi(x_{1}^{'}).$
From the previous analysis, we can say that loan guarantees do not prevent bankruptcy. They mainly care about debtholders investments.

Now, suppose $V(t)<B$ then $\Phi(x_{1})>\Phi(x_{1}^{'}).$ This means, if the firm is already less than the face value of the debt, they may be a chance that
an additional debt may decrease the probability of default. But the question is will this additional debt able to avoid firm to a new bankruptcy situation? 
For this reason, we need to compute what is the profitability index for a new project that we want to invest in and decide.
\section*{ACKNOWLEDGEMENTS}
Antoine Tambue  is supported  by the Research Council
of Norway  under grant number 190761/S60.

\end{document}